\documentclass[11pt]{article}

\usepackage[a4paper,margin=1in]{geometry}
\usepackage[comma,authoryear,round]{natbib}
\usepackage{amsmath,amssymb,amsthm}
\usepackage{booktabs}
\usepackage{enumitem}
\usepackage{xcolor}
\usepackage{microtype}
\usepackage[unicode=true,bookmarks=false,breaklinks=true,
  colorlinks=true,citecolor=blue,linkcolor=blue,urlcolor=blue]{hyperref}

\newtheorem{theorem}{Theorem}

\theoremstyle{definition}
\newtheorem{definition}{Definition}

\theoremstyle{remark}
\newtheorem{remark}[theorem]{Remark}

\newcommand{\N}{N}
\newcommand{\M}{M}

\title{Pairwise Maximin Share Allocations Need Not Exist}
\author{Haris Aziz\\
UNSW Sydney, Australia\\
\href{mailto:haris.aziz@unsw.edu.au}{\texttt{haris.aziz@unsw.edu.au}}}
\date{}

\begin{document}
\maketitle

\begin{abstract}
We study the fair allocation of indivisible goods among agents with strictly
positive additive valuations. Pairwise maximin share fairness (PMMS) asks
that, for every ordered pair of agents, the first agent value her own bundle
at least as highly as the best worst-case share she could secure by
repartitioning the two agents' combined bundles into two parts.  Whether a PMMS allocation always exists for positive additive
valuations has remained open since the notion was introduced.
We resolve this question in the negative. We construct an instance with four
agents and strictly positive additive valuations that admits no complete PMMS
allocation. 
\end{abstract}

\section{Introduction}

Cut-and-choose is the canonical protocol for dividing a resource between two
agents. One agent partitions the resource into two parts and the other agent
chooses first. The cutter can therefore guarantee the value of her best
balanced two-way partition. For divisible resources this idea underlies
envy-free cake cutting. With indivisible goods, exact envy-freeness can fail
for the elementary reason that a single valuable good cannot be split.

\textit{Pairwise maximin share (PMMS)} fairness imports the robust part of cut-and-choose into
a multi-agent allocation. Fix two agents $i$ and $j$ and pool the bundles they
receive. Agent $i$ asks what value she could guarantee if she repartitioned
that pool into two bundles and received the less valuable one. An allocation
is PMMS if $i$'s actual bundle meets this benchmark for every other agent $j$~\citep{caragiannis2019unreasonable}.

This definition is appealing for several reasons. First, it has the direct
procedural interpretation of pairwise cut-and-choose. Second, it treats every
pair symmetrically while evaluating fairness from each agent's own cardinal
perspective. Third, for strictly positive additive valuations, PMMS implies
EFX and hence also EF1.
PMMS is thus a
particularly demanding local fairness benchmark, yet one that is meaningful
even when exact envy-freeness is impossible. Considering that the existence of an EFX allocation remains an open problem for positive additive valuations, it also motivates to settle the guaranteed existence of 
allocations satisfying the stronger PMMS concept. 

The strength of PMMS has made its existence difficult to understand. Exact
PMMS is immediate for two agents and for identical valuations, but the
general additive problem remained unresolved. Work therefore concentrated on
multiplicative approximations, culminating in a $0.781$ existential guarantee
for additive valuations \citep{kurokawa2017fair}. Exact existence was
subsequently established in several structured domains, while a recent
counterexample showed failure for monotone valuations when two of the three
valuations are nonadditive \citep{byrka2026probing}. None of these results
settled the additive case.

\newpage

\paragraph{Our contribution.}
We prove that a complete PMMS allocation need not exist even when every agent
has a strictly positive additive valuation.

\begin{theorem}[Main result]\label{thm:main}
There exists an instance with four agents and strictly positive additive
valuations for which no complete PMMS allocation exists.
\end{theorem}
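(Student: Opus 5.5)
The theorem is existential, so the proof is a construction followed by an exhaustive impossibility argument. I would first fix a concrete instance: four agents $\N=\{1,2,3,4\}$ and a small set $\M$ of goods. Something like eight to ten goods seems right, since with at most $n+1$ goods PMMS is easy to achieve. The valuations should be strictly positive integers, and I would design them around a small "core" of structurally distinct goods: a few large goods, plus several small "filler" goods with tiny positive values that make every agent's valuation strictly positive.

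The guiding idea is to force cyclic tension among the pairs. Each agent has a PMMS benchmark against each other agent, and these benchmarks can only be met if that agent receives a specific type of bundle. I would pick the values so that the required bundle types for agents $1,2,3$ conflict in a rock--paper--scissors fashion, with agent $4$ acting as a "sink" whose presence blocks the obvious escapes. Concretely, I would aim for the known three-agent obstruction pattern from the nonadditive counterexample of \citet{byrka2026probing}. There, the nonadditive behaviour would be simulated additively by having a fourth agent absorb goods in a way that changes the pooled bundles.

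The verification is organised as a case analysis on who receives the large goods. First, I would show that no agent can receive two large goods. If she did, some other agent's bundle would consist only of small goods, and that agent's PMMS condition against her would fail. This follows because her pairwise maximin share of the pooled bundle exceeds the total value of all small goods.

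Next, with each large good in a distinct bundle, up to symmetry there are only $4!$ assignments of large goods, or fewer if some agents get none. For each assignment, I would show that the small goods cannot be distributed so that all twelve ordered-pair inequalities hold. The tool is to derive, for each pair $(i,j)$, a linear inequality on the small-good values each side receives. Every feasible PMMS allocation must satisfy these inequalities, and I would show that summing them along a cycle of agents gives a contradiction: the cycle demands strictly more small-good value than exists. Computing a pairwise maximin share for a fixed pooled bundle is a two-way partition problem on few items, so each inequality is checkable by hand.

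The main obstacle is finding values for which this cycle argument actually closes in every case. I would attempt this first by computer search: enumerate all $4^{|\M|}$ allocations for candidate integer valuations and run a local search until no PMMS allocation survives. I would then reverse-engineer a human-readable case split from the surviving constraints. For the written proof, I would present the instance in a table and the case analysis as a short list of lemmas. I would also remark that exhaustive enumeration of the $4^{|\M|}$ allocations, a few hundred thousand at most, independently confirms the claim.
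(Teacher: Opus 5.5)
Your proposal has a genuine gap: it never exhibits an instance, so nothing is proved. For an existential statement the instance is the whole content, and you defer it to a computer search whose success you yourself identify as the main obstacle.

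The heuristics meant to guide that search are also unsupported:
\begin{itemize}
\item You give no mechanism by which a fourth agent could ``simulate'' the nonadditive valuations of \citet{byrka2026probing}, and the obvious one is unavailable. The PMMS test for $(i,j)$ depends only on $X_i\cup X_j$ and the additive $v_i$. A third agent can restrict which pools arise, but cannot make $v_i$ evaluate bipartitions of a pool nonadditively.
\item A PMMS condition is a max--min over bipartitions, not a linear inequality. It becomes linear only after you commit to a particular bipartition, and that choice depends on the allocation. Adding such inequalities for different agents with different valuations around a cycle does not by itself give a ``more value than exists'' contradiction.
\item Nothing supports the guess that $8$--$10$ goods suffice; the paper's instance has $52$.
\end{itemize}

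The missing idea is the paper's cardinality-forcing reduction.
\begin{itemize}
\item Take $13$ core goods with deficits $c_i(g)=101-v_i(g)$. The deficit is $20$ on each of the three goods of $M_A$, and $1$ or $7$ on the five goods of $M_B$ and of $M_C$, swapped between the two agent types. These deficits sum to $100$.
\item Add $39$ common padding goods of value $101$.
\item Since $v_i(S)=101|S|-c_i(S)$ with $0\le c_i(S)\le 100<101$, any set of $p+1$ goods beats any set of $p$ goods. So PMMS forces every bundle to contain exactly $13$ goods.
\item PMMS then reduces to a chore-type condition: $\max\{c_i(U),c_i(V)\}\geq c_i(Y_i)$ for every bipartition $(U,V)$ of $Y_i\cup Y_j$, where $Y_i$ is the core part of $X_i$.
\end{itemize}

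On this condition the He--Tao chore pattern gives a short hand-checkable contradiction:
\begin{itemize}
\item No agent holds two $M_A$ goods.
\item The agent without an $M_A$ good (say agent $1$) holds at least three goods from each of $M_B$ and $M_C$.
\item Splitting off three $M_C$ goods and two $M_B$ goods from her bundle gives a violating bipartition against an agent holding no $M_C$ good.
\end{itemize}

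Note that the special goods $M_A$ are the \emph{least} valuable, and the fillers are the \emph{most} valuable and numerous. This is the reverse of your ``few large goods plus tiny fillers'' design.
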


Theorem~\ref{thm:main} resolves the exact-existence question for additive
valuations in the negative. Strict positivity is worth emphasizing: the
obstruction is not caused by dummy goods or zero-value degeneracies. Since a
PMMS allocation always exists for two additive agents and the three-agent
additive case remains open, four agents are the smallest number for which
additive nonexistence is currently known.

The result also clarifies the relationship with EFX. Under strictly positive
additive valuations, PMMS implies EFX, but the converse fails. Our theorem
therefore rules out PMMS as a route to proving universal additive EFX
existence; it does not provide an EFX counterexample. 


\section{Related Work}\label{sec:related}

\paragraph{Origins and neighboring notions.}
\citet{caragiannis2019unreasonable} introduced PMMS together with EFX in their study of the
fairness properties of maximum Nash welfare allocations. 
For a
broader survey of fairness notions for indivisible goods, see
\citet{amanatidis2023survey}.

For positive additive valuations, PMMS implies EFX. Quantitatively,
\citet{amanatidis2018comparing} showed that every EFX allocation is
$2/3$-PMMS and every EF1 allocation is $1/2$-PMMS; both implications are tight
in the appropriate sense. As EF1 allocations can be computed efficiently
\citep{lipton2004approximately}, the latter implication already gives a
simple polynomial-time baseline for approximate PMMS. In the other direction,
approximate PMMS alone need not give any constant EFX guarantee
\citep{amanatidis2018comparing}. Very recent work determines that exact PMMS
guarantees a tight $10/17$ fraction of the global MMS benchmark
\citep{qin2026exactmms}.

In the chore (additive negative valuations) setting, \citet{HeTa26a} prove that EFX and hence PMMS allocations need not exist for any ($n\geq 4$). 

\paragraph{Approximate PMMS for additive valuations.}
The best general existential guarantee is $0.781$
\citep{kurokawa2017fair}, improving the
$(\varphi-1)\approx0.618$ guarantee of Nash-welfare maximization
\citep{caragiannis2019unreasonable}. Simultaneous guarantees include an
efficiently computable $0.618$-EFX and $0.717$-PMMS allocation
\citep{amanatidis2020multiple}, and an existential allocation that is
$0.618$-EFX, $0.618$-PMMS, and achieves $0.618$ of the optimal Nash welfare
\citep{feldman2024tradeoffs}. Instances with range parameter $\gamma$ admit
efficiently computable $(5\gamma/6)$-PMMS allocations; in particular,
restricted additive instances admit efficiently computable $5/6$-PMMS
allocations \citep{barman2024parameterized}. Common-ranking instances admit
efficiently computable $4/5$-PMMS allocations, originally claimed by
\citet{dai2024existence} and subsequently proved using a nondegeneracy
reduction by \citet{feldman2026epmms}.

\paragraph{Exact existence and relaxations.}
Before the present work, no additive counterexample to exact PMMS was known.
\citet{byrka2026probing} constructed a three-agent instance with no PMMS
allocation, but only one valuation in that instance is additive; the other
two are monotone and nonadditive. The same paper proves polynomial-time
existence for factored personalized bivalued valuations, for binary-valued
MMS-feasible valuations, and for pair-demand valuations. In another sparse
domain, exact PMMS exists when each good is valued by at most two agents
\citep{christodoulou2026multigraph}.


\section{Model and Definitions}\label{sec:model}

Let $\N=\{1,\ldots,n\}$ be a finite set of agents and let
$\M=\{g_1,\ldots,g_m\}$ be a finite set of indivisible goods. Each agent
$i\in\N$ has a valuation function $v_i:2^\M\to\mathbb{R}_{\geq0}$, normalized
so that $v_i(\varnothing)=0$.

The valuation $v_i$ is \emph{additive} if
$
  v_i(S)=\sum_{g\in S}v_i(g)
  \qquad\text{for every }S\subseteq\M,
$
where $v_i(g)$ abbreviates $v_i(\{g\})$. It is \emph{strictly positive} if
$v_i(g)>0$ for every $g\in\M$. Unless stated otherwise, all valuations in our
counterexample are additive and strictly positive.

An \emph{allocation} is a tuple $A=(A_1,\ldots,A_n)$ of pairwise disjoint
bundles. It is \emph{complete} if $\bigcup_{i\in\N}A_i=\M$. All allocations
in the main result are complete.

\begin{definition}[Pairwise maximin share fairness]\label{def:pmms}
	For $\alpha\in[0,1]$, an allocation $A=(A_1,\ldots,A_n)$ is
	\emph{$\alpha$-pairwise maximin share fair} ($\alpha$-PMMS) if, for every
	pair of distinct agents $i,j\in\N$,
	
	$$
	v_i(A_i)\geq
	\alpha\max_{B\subseteq A_i\cup A_j}
	\min\bigl\{v_i(B),v_i((A_i\cup A_j)\setminus B)\bigr\}.
	$$
	
	It is \emph{pairwise maximin share fair} (PMMS) when the condition holds
	with $\alpha=1$. Otherwise, if the inequality fails for $(i,j)$, agent $i$
	\emph{PMMS-envies} agent $j$.
\end{definition}

%
%

For comparison, an allocation is EFX if
$
  v_i(A_i)\geq v_i(A_j\setminus\{g\})
\text{ for every }i,j\in\N\text{ and every }g\in A_j.
$
Because our valuations are strictly positive, there is no distinction here
between EFX and the convention that quantifies only over goods of positive
value to the potentially envious agent.

\section{The Counterexample}\label{sec:counterexample}

We now give the instance establishing Theorem~\ref{thm:main}.  The proof is
self-contained and uses only the values in Table~\ref{tab:counterexample},
additivity, and the definition of PMMS.

There are four agents and $52$ goods.  Partition the goods into four classes
$M_A,M_B,M_C,M_D$ of respective sizes $3,5,5,39$.  Goods in the same class
have the same value.  Table~\ref{tab:counterexample} gives the value of one
good from each class.

\begin{table}[ht]
\centering
\caption{The strictly positive additive counterexample.}
\label{tab:counterexample}
\begin{tabular}{@{}lrrrr@{}}
\toprule
& $M_A$ ($3$ goods) & $M_B$ ($5$ goods)
& $M_C$ ($5$ goods) & $M_D$ ($39$ goods) \\
\midrule
Agents $1,2$ & $81$ & $100$ & $94$ & $101$ \\
Agents $3,4$ & $81$ & $94$  & $100$ & $101$ \\
\bottomrule
\end{tabular}
\end{table}

It is useful to express values relative to the common baseline $101$.  For
each agent $i$ and good $g$, define its \emph{deficit} by
\[
    c_i(g):=101-v_i(g),
\]
and extend $c_i$ additively to bundles.  Thus the deficits are
\begin{equation}
\begin{array}{c|rrrr}
& M_A&M_B&M_C&M_D\\ \hline
i\in\{1,2\}&20&1&7&0\\
i\in\{3,4\}&20&7&1&0
\end{array}
\label{eq:deficit-table}
\end{equation}
The total deficit of all goods is the same for every agent:
\begin{equation}
  c_i(\M)=3\cdot20+5\cdot1+5\cdot7+39\cdot0=100.
\label{eq:total-deficit}
\end{equation}

Since each singleton value satisfies $v_i(g)=101-c_i(g)$ and both $v_i$ and
$c_i$ are additive, summing over the goods in a bundle gives, for every
$S\subseteq\M$,
\begin{equation*}
	v_i(S)=\sum_{g\in S}\bigl(101-c_i(g)\bigr)=101|S|-c_i(S).
\end{equation*}
The deficit of a bundle is bounded on both sides. Every singleton deficit is
nonnegative, so $c_i(S)\geq 0$; nonnegativity also gives monotonicity,
$c_i(S)\leq c_i(\M)$. 
Hence $0\leq c_i(S)\leq c_i(M)=100$ for every bundle $S$, and
substituting into the identity above yields
\begin{equation}
	v_i(S)=101|S|-c_i(S),
	\qquad
	101|S|-100\leq v_i(S)\leq101|S|.
	\label{eq:value-deficit}
\end{equation}
The upper bound is attained exactly when $S$ consists of padding goods only,
and the lower bound exactly when $S$ contains all thirteen goods of
$A\cup B\cup C$. The essential feature of \eqref{eq:value-deficit} is that
the total deficit, $100$, is \emph{strictly smaller} than the base value
$101$ of a single good: a bundle is worth $101$ per good up to an error
smaller than one good's worth. Consequently a bundle with more goods is
always strictly more valuable---a set of $p+1$ goods is worth at least
$101p+1$, whereas any set of $p$ goods is worth at most $101p$---and this is
what lets cardinality dominate composition throughout the proof.

In particular, among bundles of the same size, agent $i$ prefers precisely
the bundle with the smaller $i$-deficit.

\begin{proof}[Proof of Theorem~\ref{thm:main}]
Suppose, for contradiction, that $X=(X_1,X_2,X_3,X_4)$ is a complete PMMS
allocation.

\textbf{Step 1: every bundle has size 13.}
If $|X_i|=p$ and $|X_j|\geq p+2$, then $X_i\cup X_j$ contains at least
$2p+2$ goods and admits a bipartition into parts of at least $p+1$ goods
each. By \eqref{eq:value-deficit}, each part has $i$-value at least
$101(p+1)-100=101p+1>101p\geq v_i(X_i)$. Hence
\[
\max_{B\subseteq X_i\cup X_j}
\min\{v_i(B),v_i((X_i\cup X_j)\setminus B)\}>v_i(X_i),
\]
contradicting PMMS. Bundle sizes therefore differ by at most one, and since
they sum to $52$, every bundle has exactly $13$ goods. Put
$R:=M_A\cup M_B\cup M_C$ and $Y_i:=X_i\cap R$.

\textbf{Step 2: a necessary deficit inequality.}
For every ordered pair $(i,j)$ and every bipartition $(U,V)$ of
$Y_i\cup Y_j$,
\begin{equation}
  \max\{c_i(U),c_i(V)\}\geq c_i(Y_i).
  \label{eq:transfer}
\end{equation}
Otherwise both deficits are strictly below $c_i(Y_i)$. Since
$Y_i\cup Y_j\subseteq R$ and $|R|=13$, both $|U|$ and $|V|$ are at most
$13$. The union
$X_i\cup X_j$ contains exactly
\[
26-|Y_i\cup Y_j|=(13-|U|)+(13-|V|)
\]
padding goods, so $(U,V)$ can be changed to a bipartition $(P,Q)$ of
$X_i\cup X_j$ with $|P|=|Q|=13$. Padding goods have deficit zero, so
\begin{align*}
v_i(P)&=13\cdot101-c_i(U)>13\cdot101-c_i(Y_i)=v_i(X_i),\\
v_i(Q)&=13\cdot101-c_i(V)>13\cdot101-c_i(Y_i)=v_i(X_i),
\end{align*}
contradicting PMMS.

Call the goods in $M_B\cup M_C$ \emph{ordinary}; their total deficit is
$5\cdot1+5\cdot7=40$ for every agent, while each $M_A$-good has deficit
$20$.

\textbf{Step 3: no agent receives two $M_A$-goods.}
Suppose agent $i$ receives $a\geq2$ of the $M_A$-goods, and let $s$ be the
$i$-deficit of the ordinary goods in $Y_i$, so
$c_i(Y_i)=20a+s\geq40$. At least two agents receive no $M_A$-good. Since
the ordinary goods have total deficit $40$, one of them, say $j$, satisfies
$t:=c_i(Y_j)\leq20$.

Choose $h\in Y_i\cap M_A$ and partition $Y_i\cup Y_j$ into
\[
U=Y_i\setminus\{h\},
\qquad
V=Y_j\cup\{h\}.
\]
Then
\[
c_i(U)=c_i(Y_i)-20<c_i(Y_i)
\]
and
\[
c_i(V)=20+t\leq40\leq c_i(Y_i).
\]
Thus both deficits are strictly below $c_i(Y_i)$ unless all three equalities
$a=2$, $s=0$, and $t=20$ hold. Outside this exceptional case,
\eqref{eq:transfer} is contradicted.

In the exceptional case, $Y_i$ consists of two $M_A$-goods, say $h_1,h_2$,
and $Y_j$ is a nonempty set of ordinary goods of total $i$-deficit $20$.
Pick $g\in Y_j$ and write $z:=c_i(g)\in\{1,7\}$. Partition
$Y_i\cup Y_j$ instead into
\[
\{h_1,g\}
\qquad\text{and}\qquad
\{h_2\}\cup(Y_j\setminus\{g\}).
\]
Their deficits are
\[
20+z\leq27<40=c_i(Y_i)
\]
and
\[
20+(20-z)=40-z<40=c_i(Y_i),
\]
again contradicting \eqref{eq:transfer}. Hence every agent receives at most
one $M_A$-good.

The three $M_A$-goods therefore lie with three distinct agents, and exactly
one agent receives none. By exchanging agents within the two identical
types, and if necessary exchanging the two types together with the names
$M_B,M_C$, we may assume agent $1$ receives no $M_A$-good. Set
\[
S:=Y_1,
\qquad
x:=|S\cap M_B|,
\qquad
y:=|S\cap M_C|.
\]
Thus
\[
c_1(S)=c_2(S)=x+7y,
\qquad
c_3(S)=c_4(S)=7x+y.
\]

\textbf{Step 4: $x\geq3$ and $y\geq3$.}
If $x+y\geq8$, then $x,y\leq5$ immediately gives $x,y\geq3$. Suppose
therefore that $x+y\leq7$. At least three ordinary goods then lie outside
$S$, all held by agents $2,3,4$.

First assume some $j\in\{2,3,4\}$ holds nothing beyond her $M_A$-good. The
other two $M_A$-holders then share the ordinary goods outside $S$, so one of
them, say $k$, holds at least two. Split the ordinary goods of $Y_k$ into two
nonempty parts of $k$-deficits $t_1,t_2\geq1$, and pair one of the two
$M_A$-goods of $Y_k\cup Y_j$ with each part. The resulting bipartition has
$k$-deficits
\[
20+t_1<20+t_1+t_2=c_k(Y_k)
\]
and
\[
20+t_2<20+t_1+t_2=c_k(Y_k),
\]
contradicting \eqref{eq:transfer}.

Hence every $j\in\{2,3,4\}$ holds an ordinary good, so
$\sigma_j:=c_j(Y_j)-20\geq1$. Partition $Y_j\cup S$ into the $M_A$-good of
$j$ and everything else. The two $j$-deficits are
\[
20<20+\sigma_j=c_j(Y_j)
\]
and
\[
\sigma_j+c_j(S).
\]
By \eqref{eq:transfer}, the second part must have deficit at least
$c_j(Y_j)=20+\sigma_j$. Therefore
\[
\sigma_j+c_j(S)\geq20+\sigma_j,
\qquad\text{so}\qquad c_j(S)\geq20.
\]
For $j=2$ and $j=3$, respectively, this gives
\[
x+7y\geq20,
\qquad
7x+y\geq20.
\]
Since $x,y\leq5$,
\[
x\leq2\ \Longrightarrow\ 7x+y\leq14+5=19,
\]
so $x\geq3$; symmetrically,
\[
y\leq2\ \Longrightarrow\ x+7y\leq5+14=19,
\]
so $y\geq3$.

\textbf{Step 5: the final violating partition.}
Because $y\geq3$, at most two $M_C$-goods lie outside $S$, so some
$j\in\{2,3,4\}$ holds none. Because $x\geq3$, at most two $M_B$-goods lie
outside $S$, and this $j$ holds at most two of them. Hence
\[
c_1(Y_j)\leq20+2\cdot1=22.
\]
Also
\[
c_1(S)=x+7y\geq3+7\cdot3=24.
\]
Since $x,y\geq3$, choose $T\subseteq S$ consisting of three $M_C$-goods and
two $M_B$-goods. Then
\[
c_1(T)=3\cdot7+2\cdot1=23<24\leq c_1(S).
\]
Partition $S\cup Y_j$ into $T$ and its complement. The complementary part
has deficit
\begin{align*}
c_1((S\cup Y_j)\setminus T)
 &=c_1(S)+c_1(Y_j)-c_1(T)\\
 &\leq c_1(S)+22-23\\
 &=c_1(S)-1\\
 &<c_1(S).
\end{align*}
Thus both parts have deficit strictly below $c_1(Y_1)=c_1(S)$, violating
\eqref{eq:transfer} for the ordered pair $(1,j)$ and completing the
contradiction.
\end{proof}

\begin{remark}[Origin of the construction]\label{rem:origin}
The $13$-item deficit pattern in~(1) is a modification of the additive-chore
instance used by \citet{HeTa26a}.  Their
instance has three chores of cost $20$, five chores with costs $1$ and $7$
for the two agent types, and five chores with the two costs reversed.  We
turn an item of cost $c_i(g)$ into a good of value $101-c_i(g)$ and add $39$
common goods of value $101$.  The added goods force every PMMS allocation to
have equal bundle sizes, after which value comparisons become the reverse of
the corresponding deficit comparisons.  
\end{remark}

%

\paragraph{Acknowledgement.}
The author acknowledges the assistance of OpenAI's GPT-5.6, which identified the counterexample.  All arguments and calculations  were independently verified by the author.

\bibliographystyle{plainnat}

\begin{thebibliography}{14}
	\providecommand{\natexlab}[1]{#1}
	\providecommand{\url}[1]{\texttt{#1}}
	\expandafter\ifx\csname urlstyle\endcsname\relax
	\providecommand{\doi}[1]{doi: #1}\else
	\providecommand{\doi}{doi: \begingroup \urlstyle{rm}\Url}\fi
	
	\bibitem[Amanatidis et~al.(2018)Amanatidis, Birmpas, and
	Markakis]{amanatidis2018comparing}
	Georgios Amanatidis, Georgios Birmpas, and Vangelis Markakis.
	\newblock Comparing approximate relaxations of envy-freeness.
	\newblock In \emph{Proceedings of the Twenty-Seventh International Joint
		Conference on Artificial Intelligence (IJCAI)}, pages 42--48, 2018.
	\newblock \doi{10.24963/ijcai.2018/6}.
	
	\bibitem[Amanatidis et~al.(2020)Amanatidis, Markakis, and
	Ntokos]{amanatidis2020multiple}
	Georgios Amanatidis, Evangelos Markakis, and Apostolos Ntokos.
	\newblock Multiple birds with one stone: Beating $1/2$ for {EFX} and {GMMS} via
	envy cycle elimination.
	\newblock \emph{Theoretical Computer Science}, 841:\penalty0 94--109, 2020.
	\newblock \doi{10.1016/j.tcs.2020.07.006}.
	
	\bibitem[Amanatidis et~al.(2023)Amanatidis, Aziz, Birmpas, Filos-Ratsikas, Li,
	Moulin, Voudouris, and Wu]{amanatidis2023survey}
	Georgios Amanatidis, Haris Aziz, Georgios Birmpas, Aris Filos-Ratsikas, Bo~Li,
	Herv{\'e} Moulin, Alexandros~A. Voudouris, and Xiaowei Wu.
	\newblock Fair division of indivisible goods: Recent progress and open
	questions.
	\newblock \emph{Artificial Intelligence}, 322:\penalty0 103965, 2023.
	\newblock \doi{10.1016/j.artint.2023.103965}.
	
	\bibitem[Barman et~al.(2024)Barman, Kar, and Pathak]{barman2024parameterized}
	Siddharth Barman, Debajyoti Kar, and Shraddha Pathak.
	\newblock Parameterized guarantees for almost envy-free allocations.
	\newblock \emph{CoRR}, abs/2312.13791, 2024.
	\newblock URL \url{https://arxiv.org/abs/2312.13791}.
	
	\bibitem[Byrka et~al.(2026)Byrka, Malinka, and Ponitka]{byrka2026probing}
	Jaros{\l}aw Byrka, Franciszek Malinka, and Tomasz Ponitka.
	\newblock Probing {EFX} via {PMMS}: (non-)existence results in discrete fair
	division.
	\newblock In \emph{Proceedings of the Fortieth AAAI Conference on Artificial
		Intelligence (AAAI)}, pages 16735--16742, 2026.
	\newblock URL \url{https://arxiv.org/abs/2507.14957}.
	
	\bibitem[Caragiannis et~al.(2019)Caragiannis, Kurokawa, Moulin, Procaccia,
	Shah, and Wang]{caragiannis2019unreasonable}
	Ioannis Caragiannis, David Kurokawa, Herv{\'e} Moulin, Ariel~D. Procaccia,
	Nisarg Shah, and Junxing Wang.
	\newblock The unreasonable fairness of maximum {Nash} welfare.
	\newblock \emph{ACM Transactions on Economics and Computation}, 7\penalty0
	(3):\penalty0 12:1--12:32, 2019.
	\newblock \doi{10.1145/3355902}.
	
	\bibitem[Christodoulou and Mastrakoulis(2026)]{christodoulou2026multigraph}
	George Christodoulou and Symeon Mastrakoulis.
	\newblock Exact and approximate maximin share allocations in multi-graphs.
	\newblock In \emph{Proceedings of the Fortieth AAAI Conference on Artificial
		Intelligence (AAAI)}, pages 16761--16769, 2026.
	\newblock URL \url{https://arxiv.org/abs/2506.20317}.
	
	\bibitem[Dai et~al.(2024)Dai, Guo, Miao, Gao, Xu, and Zhang]{dai2024existence}
	Sijia Dai, Xinru Guo, Huahua Miao, Guichen Gao, Yicheng Xu, and Yong Zhang.
	\newblock The existence and efficiency of {PMMS} allocations.
	\newblock \emph{Theoretical Computer Science}, 989:\penalty0 114388, 2024.
	\newblock \doi{10.1016/j.tcs.2024.114388}.
	
	\bibitem[Feldman et~al.(2024)Feldman, Mauras, and
	Ponitka]{feldman2024tradeoffs}
	Michal Feldman, Simon Mauras, and Tomasz Ponitka.
	\newblock On optimal tradeoffs between {EFX} and {Nash} welfare.
	\newblock In \emph{Proceedings of the Thirty-Eighth AAAI Conference on
		Artificial Intelligence (AAAI)}, pages 9688--9695, 2024.
	\newblock \doi{10.1609/aaai.v38i9.28825}.
	
	\bibitem[Feldman et~al.(2026)Feldman, Fiat, Nissan, and
	Ponitka]{feldman2026epmms}
	Michal Feldman, Amos Fiat, Yael Nissan, and Tomasz Ponitka.
	\newblock Epistemic pairwise maximin share.
	\newblock \emph{CoRR}, abs/2606.18921, 2026.
	\newblock URL \url{https://arxiv.org/abs/2606.18921}.
	
	\bibitem[He and Tao(2026)]{HeTa26a}
	W.~He and B.~Tao.
	\newblock {EFX} for additive chores: Nonexistence, {P}areto incompatibility,
	and bi-valued existence.
	\newblock \emph{arXiv preprint arXiv:2606.08872}, 2026.
	
	\bibitem[Kurokawa(2017)]{kurokawa2017fair}
	David Kurokawa.
	\newblock \emph{Fair Division in Game Theoretic Settings}.
	\newblock Ph.d. thesis, Carnegie Mellon University, 2017.
	\newblock URL
	\url{https://reports-archive.adm.cs.cmu.edu/anon/2017/CMU-CS-17-122.pdf}.
	
	\bibitem[Lipton et~al.(2004)Lipton, Markakis, Mossel, and
	Saberi]{lipton2004approximately}
	Richard~J. Lipton, Evangelos Markakis, Elchanan Mossel, and Amin Saberi.
	\newblock On approximately fair allocations of indivisible goods.
	\newblock In \emph{Proceedings of the 5th ACM Conference on Electronic Commerce
		(EC)}, pages 125--131, 2004.
	\newblock \doi{10.1145/988772.988792}.
	
	\bibitem[Qin(2026)]{qin2026exactmms}
	Qinghua Qin.
	\newblock The exact {MMS} guarantees of {EFX} and {PMMS}.
	\newblock \emph{CoRR}, abs/2608.30267, 2026.
	\newblock URL \url{https://arxiv.org/abs/2608.30267}.
	
\end{thebibliography}

\end{document}